\newcommand{\be}{\begin{equation}}
\newcommand{\en}{\end{equation}}
\newcommand{\bea}{\begin{eqnarray}}
\newcommand{\ena}{\end{eqnarray}}
\newcommand{\beano}{\begin{eqnarray*}}
\newcommand{\enano}{\end{eqnarray*}}
\newcommand{\bee}{\begin{enumerate}}
\newcommand{\ene}{\end{enumerate}}
\newcommand{\mb}{\mathbb}
\newcommand{\R}{\mathbb{R}}
\newcommand{\mc}{\mathcal}
\newcommand{\D}{{\mc D}}
\newcommand{\Sc}{{\cal S}}
\newcommand{\F}{{\cal F}}
\newcommand{\Lc}{{\cal L}}
\newcommand{\1}{1 \!\! 1}
\newcommand{\Hil}{\mc H}
\newtheorem{thm}{Theorem}
\newtheorem{cor}[thm]{Corollary}
\newtheorem{prop}[thm]{Proposition}
\newtheorem{defn}[thm]{Definition}
\newenvironment{proof}{\noindent {\bf Proof --}}{\hfill$\square$ \vspace{3mm}\endtrivlist}
\newcommand{\SC}{\mathcal{S}({\mb R})}
\newcommand{\SSS}{\mathcal{S}'({\mb R})}
\newcommand{\ip}[2]{\langle {#1},{#2}\rangle}
\begin{document}

\thispagestyle{empty}

\vspace*{2cm}

\begin{center}
{\Large \bf {Coordinate representation for non Hermitian position and momentum operators}}   \vspace{2cm}\\

{\large F. Bagarello}\\
 DEIM -Dipartimento di Energia, ingegneria dell' Informazione e modelli Matematici,
\\ Scuola Politecnica, Universit\`a di Palermo, I-90128  Palermo, Italy\\
and\\ INFN, Sezione di Napoli,\\
and\\ Department of Mathematics and Applied Mathematics, \\
University of Cape Town, South Africa\\
e-mail: fabio.bagarello@unipa.it

\vspace*{.5cm}

{\large F. Gargano, S. Spagnolo, S. Triolo}\\
  Dipartimento di Energia, Ingegneria dell'Informazione e Modelli Matematici,\\
Scuola Politecnica, Universit\`a di Palermo,\\ I-90128  Palermo, Italy\\

\end{center}

\vspace*{1cm}

\begin{abstract}
\noindent In this paper we undertake an analysis of the eigenstates of two non self-adjoint operators $\hat q$ and $\hat p$ similar, in a suitable sense, to the self-adjoint position and momentum operators $\hat q_0$ and $\hat p_0$ usually adopted in ordinary quantum mechanics. In particular we discuss conditions for these eigenstates to be {\em biorthogonal distributions}, and we discuss few of their properties. We illustrate our results with two examples, one in which the similarity map between the self-adjoint and the non self-adjoint is bounded, with bounded inverse, and the other in which this is not true. We also briefly propose an alternative strategy to deal with $\hat q$ and $\hat p$, based on the so-called {\em quasi *-algebras}.
\end{abstract}

\newpage

\section{Introduction}

Since some years Quantum Mechanics driven by non self-adjoint Hamiltonians with real eigenvalues has been more and more investigated by several authors, both from a physical point of view, \cite{BB98}-\cite{MLmiao1}, and with a more mathematical perspective, see \cite{petr}-\cite{int1} and the recent volume \cite{book}.
Quite often the interest is focused on the analysis of the eigenvalues and the eigenstates of some {\em Hamiltonian} operator $H$  which, contrarily to what is usually assumed in (ordinary) Quantum Mechanics, is different from $H^\dagger$. In many case, however, the eigenvalues of such an $H$ are still real, at least for a certain range of the parameters of the model, the {\em unbroken phase}, which differs from the {\em broken phase} since, in this case, some of the eigenvalues of $H$ can be complex, \cite{ben3,BG14}.

What is not discussed in the literature so far, in our knowledge, is the role of non self-adjointness in the main standard ingredients of $H$, i.e. in the position and in the momentum operators. In fact, in many physical systems the self-adjoint Hamiltonian $H_0$ is just some suitable function of the operators $\hat q_0=\hat q_0^\dagger$ and $\hat p_0=\hat p_0^\dagger$, satisfying $[\hat q_0,\hat p_0]=i\1$, where $\1$ is the identity operator. For instance, for the harmonic oscillator, we have $H_0=\frac{1}{2}(\hat p_0^2+\hat q_0^2)$. Therefore, it is surely interesting, and natural, to look also at the functional properties of both $\hat q_0$ and $\hat p_0$. And, in fact, these have been studied along the years by several authors. In particular, we refer to \cite{gieres} for a rather interesting and clear review on this topic. Recently, \cite{bagriferimento}, one of us has begun the analysis of some aspects of two operators, $\hat q$ and $\hat p$, which behave as a deformed version of $\hat q_0$ and $\hat p_0$, while they still keep some of their essential aspects. This was done in connection with bi-coherent states, in order to prove their completeness under suitable conditions.
Here, rather than on bi-coherent states, we shall work on the possibility of defining, in some mathematically rigorous manner, the eigenstates of $\hat q$ and $\hat p$, and we discuss some of their properties. We believe that this analysis can be relevant in the foregoing discussion on the coexistence of two (or more) non self-adjoint observables, see \cite{ZSRML17,fring} for instance.

 The paper is organized as follows.
 In Section \ref{sect2} we shall analyse in a rigorous way the family of eigenstates of the deformed position and momentum operators $\hat q$ and $\hat p$ by adopting a distributional approach which essentially extends the one in \cite{gieres}. We shall also briefly propose an alternative approach based on the so-called {\em quasi *-algebras}, \cite{ait,bagrev}. Section \ref{sect3} contains two examples of different nature: in the first one the operators $\hat q_0$ and $\hat p_0$ are related to $\hat q$ and $\hat p$ by a bounded operator $T$ with bounded inverse. In the second example $T$ is still bounded, but $T^{-1}$ is not. Of course, other possibilities are also possible, and some of them can be found in the literature, see \cite{baginbagbook}. For this reason, we believe that having a general settings where all these possibilities fit, can be important and useful.  Our conclusions are given in \ref{sectconcl}.

\section{Well-behaved sets of distributions}\label{sect2}

Let $\hat q_0$ and $\hat p_0$ the self-adjoint multiplication and momentum operators defined as follows:
$$
\hat q_0\,\varphi(x)=x\,\varphi(x),\qquad \hat p_0\,\varphi(x)=-i\varphi'(x),
$$
for all $\varphi(x)\in\SC$, the
Schwartz space of rapidly decreasing $C^\infty$-functions on ${\Bbb R}$. Notice that $\SC$ is not the maximal domain of these operators. In fact, these sets are
$$
D_{max}(\hat q_0)=\{f(x)\in\Lc^2(\Bbb R): \, xf(x)\in\Lc^2(\Bbb R)\}, \quad D_{max}(\hat p_0)=\{f(x)\in\Lc^2(\Bbb R): \, f'(x)\in\Lc^2(\Bbb R)\}.
$$
Of course, $\SC\subset D_{max}(\hat q_0)\cap D_{max}(\hat p_0)$, and is dense in $\Lc^2(\Bbb R)$. In the rest of the paper, also in view of its role in quantum mechanics, we will often use $\SC$ as a suitable domain for the operators we will introduce in our analysis, even in lack of its maximality. This is useful also because $\SC$ is stable under the action of $\hat q_0$ and $\hat p_0$, while $D_{max}(\hat q_0)$ and $D_{max}(\hat p_0)$ are not. In particular, $\SC\subseteq D^\infty(\hat q_0)\cap D^\infty(\hat p_0)$, where $D^\infty(X)=\cap_{k\geq0}D(X^k)$ is the domain of all the powers of the operator $X$. It is well known that neither $\hat q_0$ nor $\hat p_0$ admit square integrable eigenvectors, and that their spectra coincide with $\Bbb R$:
\be
\hat q_0\,\xi_{x_0}(x)=x_0\,\xi_{x_0}(x), \qquad \hat p_0\,\theta_{p_0}(x)=p_0\,\theta_{p_0}(x),
\label{11}\en
where $x_0$ and $p_0$ are real numbers,
and
\be
\xi_{x_0}(x)=\delta(x-x_0), \qquad \theta_{p_0}(x)=\frac{1}{\sqrt{2\pi}}\,e^{ip_0x}.
\label{12}\en
Of course, $\xi_{x_0}(x), \theta_{p_0}(x)\in \SSS$, the set of tempered distributions (i.e. the continuous linear functionals on $\SC$). Two well known properties of the generalized eigenvectors $\{\xi_{x_0}(x), \, x_0\in\Bbb R\}$, are the following equalities:
\be\left<\xi_{x_0},\xi_{y_0}\right>=\delta(x_0-y_0), \qquad
\int_{\Bbb R}dx_0\,|\xi_{x_0}\left>\right<\xi_{x_0}|=\1.
\label{add1}\en
Here $\left<\xi_{x_0},\xi_{y_0}\right>$ can be seen as an extension of the scalar product in $\Lc^2(\Bbb R)$ to two elements in $\SSS$, which gives as a result another element in $\SSS$. The way in which this extension is constructed works as follows: given two distributions $F$ and $G$ in $\SSS$, we define $\left<F,G\right>=(\overline F\ast\tilde G)(0)$\footnote{This is because of the following equality: $\left<\varphi,\psi\right>=(\overline \varphi\ast\tilde \psi)(0)$, which holds for each $\varphi(x),\psi(x)\in\SC$. Here $\tilde \psi(x)=\psi(-x)$.}, where $\tilde G(x)=G(-x)$, \cite{vladim}. In other words, we use the convolution between distributions to extend the scalar product to $\SSS$. The convolution $\overline F\ast\tilde G$ is further defined as follows:
\be
\left(\overline F\ast\tilde G,\varphi\right)=\int_{\Bbb R}\int_{\Bbb R}\overline{F(x)}\,\tilde G(y)\,\varphi(x+y)\,dx\,dy=\left<F,G\ast\varphi\right>,
\label{add3}\en
$\forall\varphi(x)\in\SC$. It is well known, \cite{vladim}, that the convolution between two distributions does not always exist.
However, sufficient conditions on the support of the distributions are considered in the literature which ensure the existence of $\overline F\ast\tilde G$, and therefore of $\left<F,G\right>$, and these conditions are satisfied in our case: if we take $F=\xi_{x_0}$ and $G=\xi_{y_0}$, simple computations show that $$\left(\overline \xi_{x_0}\ast\tilde \xi_{y_0},\varphi\right)=\left<\xi_{x_0},\xi_{y_0}\ast\varphi\right>=\int_{\Bbb R}\xi_{x_0}(x)\varphi(x-y_0)dx=\varphi(x_0-y_0)=\left(\xi_{t_0},\varphi\right),$$
where $t_0=x_0-y_0$, for all $\varphi(x)\in\SC$. Hence $\left(\overline \xi_{x_0}\ast\tilde \xi_{y_0}\right)(x)=\xi_{t_0}(x)$, and therefore
$$
\left<\xi_{x_0},\xi_{y_0}\right>=\left(\overline \xi_{x_0}\ast\tilde \xi_{y_0}\right)(0)=\xi_{t_0}(0)=\delta(x_0-y_0),
$$
which is the first equality in (\ref{add1}). The second equality follows from the fact that, for all $\varphi(x)\in\SC$, $\varphi(x_0)=\left<\xi_{x_0},\varphi\right>$. Then we have
$$
\varphi(x)=\int_{\Bbb R}\delta(x-x_0)\varphi(x_0)\,dx_0=\int_{\Bbb R}\xi_{x_0}(x)\left<\xi_{x_0},\varphi\right>\,dx_0,
$$
as we had to prove.
Incidentally, we see that the resolution of the identity in (\ref{add1}) has to be intended (at least) on $\SC$.

Let us now consider an operator $T$, not necessarily bounded, with domain $D(T)$ larger than (or equal to) $\SC$. Then:

\begin{defn} \label{def1}
$T$ is $\SC$-stable if $T$ is invertible and if $T$, $T^{-1}$, $T^\dagger$ and $(T^{-1})^\dagger=(T^\dagger)^{-1}$ all map $\SC$ in $\SC$. Moreover, an $\SC$-stable operator $T$ is called fully $\SC$-stable if $T^\dagger$ and $T^{-1}$ map $\SC$ into itself continuously.

\end{defn}

Examples of these operators will be given later in the paper: essentially, a fully $\SC$-stable operator $T$ leaves $\SC$ stable, together with its adjoint, its inverse, and the inverse of its adjoint, and is such that, if $\{\varphi_n(x)\}$ is a sequence in $\SC$ which converges to $\varphi(x)$ in the $\tau_\Sc$ topology, then both $\{T^{-1}\varphi_n(x)\}$ and $\{T^\dagger\varphi_n(x)\}$ converge in the same topology. Each $\SC$-stable operator can be used to construct a non self-adjoint version of $\hat q_0$ and $\hat p_0$, as we will show now:

let $T$ be a  $\SC$-stable operator. Then, two operators $\hat q$ and $\hat p$ can be defined as follows:
\be
\hat q \varphi=T\hat q_0T^{-1}\varphi,\qquad \hat p\varphi=T\hat p_0T^{-1}\varphi,
\label{13}\en
for all $\varphi(x)\in\SC$. Of course, $\hat q$ and $\hat p$ map $\SC$ in $\SC$, so that they are, in particular, densely defined. As for $\hat q_0$ and $\hat p_0$, we are not interested here in $D_{max}(\hat q)$ and $D_{max}(\hat p)$, even if these sets could be larger than $\SC$. It is also possible to check that their adjoints satisfy the following:
\be
\hat q^\dagger \varphi=(T^{-1})^\dagger\hat q_0T^{\dagger}\varphi,\qquad \hat p^\dagger \varphi=(T^{-1})^\dagger\hat p_0T^{\dagger}\varphi,
\label{14}\en
for all $\varphi(x)\in\SC$. Hence, also $\hat q^\dagger, \hat p^\dagger$ leave $\SC$ stable.

\vspace{2mm}

{\bf Remark:--} A {\em physical way} to understand these results is by noticing that $\hat q_0$, $\hat q$ and $\hat q^\dagger$ are similar or, in other words, that they satisfy suitable intertwining relations. When this happens for (bounded) operators with discrete spectra, this implies that their eigenvectors are related by the intertwining operators ($T$, in our case).  Of course, the operators $\hat p_0$, $\hat p$ and $\hat p^\dagger$ are also similar.

\vspace{2mm}

Now, we want to  construct the eigenvectors of $\hat q$ and $\hat q^\dagger$, as well as those of $\hat p$ and $\hat p^\dagger$, and to check that they  produce a family of  {\em well-behaved} states, in the sense of \cite{bagriferimento}. We recall that, if $x\in\Bbb R$ labels a tempered distribution $\eta_x\in \Sc'(\Bbb R)$,  and if $\F_\eta$ is the set of all these distributions, $\F_\eta=\{\eta_x,\,x\in\Bbb R\}$, then

\begin{defn}\label{def2}

 $\F_\eta$ is called {$\hat q$-\em well-behaved} (or, simply, {\em well-behaved}) if:

 1. each $\eta_x$ is a generalized eigenstate of $\hat q$: $\hat q\,\eta_x=x\,\eta_x$, for all $x\in\Bbb R$;

 2. a second family of generalized vectors $\F^\eta=\{\eta^x\in\Sc'(\Bbb R),\,x\in \Bbb R\}$ exists such that $\left<\eta_x,\eta^y\right>=\delta(x-y)$ and $\int_{\Bbb R}dx|\eta_x\left>\right<\eta^x|=\int_{\Bbb R}dx|\eta^x\left>\right<\eta_x|=\1$, at least on $\SC$.

\end{defn}

To do that, we first need to extend $T$ and $(T^{-1})^\dagger$ to all of $\SSS$ even if, in principle, it would be sufficient to extend them to $\xi_{x_0}(x)$. This would be enough to define the eigenstates of $\hat q$ and $\hat q^\dagger$. But, since we also want to construct eigenstates of $\hat p$ and $\hat p^\dagger$, we prefer to set us a more general approach.

First of all it is clear that $T$ and $(T^{-1})^\dagger$ can be extended to $\SSS$ by duality. In fact, we define
\be
\left<TF,\varphi\right>=\left<F,T^\dagger\varphi\right>, \qquad \left<(T^{-1})^\dagger F,\varphi\right>=\left<F,T^{-1}\varphi\right>,
\label{15}\en
for all $F\in\SSS$ and $\varphi\in\SC $. Here $\left<.,.\right>$ is the form which puts in duality $\SC$ and $\SSS$, which extends the standard scalar product in  $\Lc^2(\Bbb R)$\footnote{This form can be defined as before, via convolution of distributions.}. It is obvious that formulas in (\ref{15}) make sense, since $\SC$ is stable under the action of $T^\dagger$ and $T^{-1}$. It is also clear that $TF$ and $(T^{-1})F$ define  linear functionals on $\SC$. To conclude that they are both tempered distributions, we still have to check that they are both $\tau_\Sc$-continuous, i.e. that if $\{\varphi_n(x)\in\SC\}$ is a sequence $\tau_\Sc$-convergent to $\varphi(x)\in\SC$, then
$
\left<TF,\varphi_n\right>\rightarrow \left<TF,\varphi\right>$, and  $\left<(T^{-1})^\dagger F,\varphi_n\right>\rightarrow \left<(T^{-1})^\dagger F,\varphi\right>,$
for all $F\in\SSS$.
This is certainly true if $T$ is fully $\SC$-stable since, for instance, we have
$$
\left<TF,\varphi_n\right>=\left<F,T^\dagger\varphi_n\right>\rightarrow\left<F,T^\dagger\varphi\right>=\left<TF,\varphi\right>,
$$
due to the fact that, if $\varphi_n(x)$ converges in the topology $\tau_S$, then $(T^\dagger\varphi_n)(x)$ converges as well, in the same topology. Then the following Corollary holds:

\begin{cor}\label{cor3}
Let $T$ be a fully $\SC$-stable operator. Then
\be
\eta_{x_0}(x)=(T\xi_{x_0})(x), \qquad \eta^{x_0}(x)=((T^{-1})^\dagger\xi_{x_0})(x),
\label{16}\en
are tempered distributions. Moreover, $\eta_{x_0}(x)\in D(\hat q)$ and $\eta^{x_0}(x)\in D(\hat q^\dagger)$, and we have:
\be
(\hat q\,\eta_{x_0})(x)=x_0\,\eta_{x_0}(x),\qquad (\hat q^\dagger \eta^{x_0})(x)=x_0\,\eta^{x_0}(x)
\label{17}\en
\end{cor}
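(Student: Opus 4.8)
The plan is to reduce both assertions to the duality extensions of $T$ and $(T^{-1})^\dagger$ just constructed, together with the algebraic intertwining relations (\ref{13})--(\ref{14}) and the elementary distributional identity $\hat q_0\xi_{x_0}=x_0\xi_{x_0}$ already recorded in (\ref{11}). The fact that $\eta_{x_0}$ and $\eta^{x_0}$ are tempered distributions requires no new work: since $T$ is fully $\SC$-stable, the paragraph preceding the statement shows that the maps $F\mapsto TF$ and $F\mapsto(T^{-1})^\dagger F$, defined on $\SSS$ via (\ref{15}), send each $F\in\SSS$ to a $\tau_\Sc$-continuous functional on $\SC$, i.e. to an element of $\SSS$; taking $F=\xi_{x_0}$ gives $\eta_{x_0},\eta^{x_0}\in\SSS$.

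To prove the first eigenvalue equation --- with $\eta_{x_0}\in D(\hat q)$ emerging as a by-product --- I would pair the extended operator $\hat q$ with an arbitrary $\varphi\in\SC$. Since $\hat q^\dagger$ leaves $\SC$ stable (the remark following (\ref{14})), the quantity $\langle\hat q\,\eta_{x_0},\varphi\rangle=\langle\eta_{x_0},\hat q^\dagger\varphi\rangle$ is well defined, and using (\ref{15}) followed by the identity $T^\dagger\hat q^\dagger=\hat q_0\,T^\dagger$ on $\SC$ (which is immediate from (\ref{14}) and $T^\dagger(T^{-1})^\dagger=\id$ on $\SC$) I get
$$
\langle\eta_{x_0},\hat q^\dagger\varphi\rangle=\langle\xi_{x_0},T^\dagger\hat q^\dagger\varphi\rangle=\langle\xi_{x_0},\hat q_0\,T^\dagger\varphi\rangle=x_0\,\langle\xi_{x_0},T^\dagger\varphi\rangle=x_0\,\langle\eta_{x_0},\varphi\rangle,
$$
where the third step uses $\hat q_0=\hat q_0^\dagger$ and $\hat q_0\xi_{x_0}=x_0\xi_{x_0}$ with $x_0\in\R$. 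The right-hand side is $\tau_\Sc$-continuous in $\varphi$ because $\eta_{x_0}\in\SSS$, so $\eta_{x_0}\in D(\hat q)$ and $\hat q\,\eta_{x_0}=x_0\,\eta_{x_0}$.

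The statement for $\eta^{x_0}$ is entirely parallel. Replacing $T$ by $(T^{-1})^\dagger$, using $(\hat q^\dagger)^\dagger=\hat q$ so that $\langle\hat q^\dagger\eta^{x_0},\varphi\rangle=\langle\eta^{x_0},\hat q\varphi\rangle$, and invoking the companion identity $T^{-1}\hat q=\hat q_0\,T^{-1}$ on $\SC$ (immediate from (\ref{13})), the same chain of equalities yields $\langle\hat q^\dagger\eta^{x_0},\varphi\rangle=x_0\langle\eta^{x_0},\varphi\rangle$ for all $\varphi\in\SC$, hence $\eta^{x_0}\in D(\hat q^\dagger)$ and $\hat q^\dagger\eta^{x_0}=x_0\,\eta^{x_0}$.

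I do not anticipate a genuine obstacle in this argument: once the duality extensions are in hand, everything collapses to two purely algebraic intertwiners on $\SC$ and to (\ref{11}). The only point where care is actually needed --- and the one place where full $\SC$-stability is used rather than plain $\SC$-stability --- is the $\tau_\Sc$-continuity that guarantees $T\xi_{x_0}$ and $(T^{-1})^\dagger\xi_{x_0}$ are again tempered distributions, and that has already been dealt with in the run-up to the Corollary; here the proof only has to put the pieces together in the right order.
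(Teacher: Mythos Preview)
Your proposal is correct and in the same spirit as the paper: the corollary is stated there as an immediate consequence of the duality extensions (\ref{15}) combined with the intertwining relations (\ref{13})--(\ref{14}) and (\ref{11}), and the paper provides no separate proof. The only minor difference is one of bookkeeping: in the analogous computation for Proposition~5 the paper writes $\langle\hat q\,\eta_{x_0},\varphi\rangle=\langle(T\hat q_0T^{-1})(T\xi_{x_0}),\varphi\rangle=\langle T\hat q_0\xi_{x_0},\varphi\rangle$, i.e.\ it cancels $T^{-1}T$ directly on the distribution side, whereas you move everything to the test-function side via $\langle\eta_{x_0},\hat q^\dagger\varphi\rangle$ and use $T^\dagger\hat q^\dagger=\hat q_0T^\dagger$ on $\SC$---which is arguably the cleaner way to make the duality extension (\ref{15}) do all the work.
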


\vspace{2mm}

{\bf Remark:--} We recall that, in general, the domain of a given operator $X$ acting on $\Hil$, $D(X)$, is the set of the following subset of $\Hil$: $D(X)=\{f\in\Hil:\, Xf\in\Hil\}$. Here we are slightly extending this notion, while keeping the same notation. In fact, $\eta_{x_0}(x)\in D(\hat q)$ would not be compatible with the above definition, since $\eta_{x_0}(x)$ does not belong to $\Lc^2(\Bbb R)$. For this reason $D(\hat q)$ and $D(\hat q^\dagger)$ should be understood as {\em generalized domains}: $D(\hat q)=\{F\in\SSS:\:\, \hat qF\in\SSS\}$, $D(\hat q^\dagger)=\{F\in\SSS:\:\, \hat q^\dagger F\in\SSS\}$, and so on.

\vspace{2mm}

In the same way, out of $(\hat p_0,\theta_{p_0}(x))$ in \eqref{11}-\eqref{12}, we can deduce the eigenstates of the operators $\hat p$ and $\hat p^\dagger$ introduced in (\ref{13}) and (\ref{14}), as in (\ref{16}):
\be
\mu_{p_0}(x)=(T\theta_{p_0})(x), \qquad \mu^{p_0}(x)=((T^{-1})^\dagger\theta_{p_0})(x),
\label{18}\en
and they are both tempered distributions, belong respectively to $D(\hat p)$ and $D(\hat p^\dagger)$ (in the sense of the previous Remark) and are generalized eigenstates of $\hat p$ and $\hat p^\dagger$ respectively:
\be
(\hat p\mu_{p_0})(x)=p_0\,\mu_{p_0}(x),\qquad (\hat p^\dagger \mu^{p_0})(x)=p_0\,\mu^{p_0}(x).
\label{19}\en

Let us now introduce the following sets of tempered distributions: $\F_\eta=\{\eta_{x_0}(x), \,x_0\in\Bbb R\}$, $\F^\eta=\{\eta^{x_0}(x), \,x_0\in\Bbb R\}$, $\F_\mu=\{\mu_{p_0}(x), \,p_0\in\Bbb R\}$ and $\F^\mu=\{\mu^{p_0}(x), \,p_0\in\Bbb R\}$.

For our purposes, it is convenient now (but not strictly necessary) to assume some extra properties to $(T^{-1})^\dagger$. In particular we assume that
  \be
  \left((T^{-1})^\dagger\xi_{y_0}\right)\ast\varphi=(T^{-1})^\dagger\left(\xi_{y_0}\ast\varphi\right),
\label{add2}\en
for all $\varphi(x)\in\SC$. Here $\ast$ is the convolution. Notice that both sides of this equality are well defined. In fact, we have already seen that $\eta^{y_0}=(T^{-1})^\dagger\xi_{y_0}$ is in $\SSS$, and therefore it admits a convolution with $\varphi(x)$, since this function belongs to $\SC$. The result is, in general, a $C^\infty$ function increasing not faster than some polynomial. As for the right-hand of (\ref{add2}), we see that $(\xi_{y_0}\ast\varphi)(x)=\int_{\Bbb R}\xi_{y_0}(s)\varphi(x-s)ds=\int_{\Bbb R}\delta(s-y_0)\varphi(x-s)ds=\varphi(x-y_0)$, which belongs to $\SC$. Hence, we can act on this function with $(T^{-1})^\dagger$. However, it is not granted a priori that the sides of (\ref{add2}) coincide. However, we observe that (\ref{add2}) is very similar to some well known properties of convolutions of distributions, like, for instance, the property that $(F\ast G)'=F'\ast G$, for all distributions $F$ and $G$ for which their convolution exists. We refer to \cite{vladim} for many details on distribution theory, and for more situations in which (\ref{add2}) is again satisfied.

Now we have

\begin{prop}\label{prop11} Under the assumption (\ref{add2}) the set $\F_\eta$ is well behaved on $\SC$.

\end{prop}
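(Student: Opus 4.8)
The plan is to verify the two conditions in Definition \ref{def2} for $\F_\eta=\{\eta_{x_0}\}$ with the companion family $\F^\eta=\{\eta^{x_0}\}$ defined in \eqref{16}. Condition 1, that each $\eta_{x_0}$ is a generalized eigenstate of $\hat q$ with eigenvalue $x_0$, is already recorded in Corollary \ref{cor3}, so nothing new is required there. The real content is condition 2: the biorthogonality relation $\left<\eta_{x_0},\eta^{y_0}\right>=\delta(x_0-y_0)$ and the two resolutions of the identity on $\SC$.

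First I would establish biorthogonality. By definition $\left<\eta_{x_0},\eta^{y_0}\right>=\left(\overline{\eta_{x_0}}\ast\widetilde{\eta^{y_0}}\right)(0)$, and the cleanest route is to evaluate the convolution against a test function $\varphi\in\SC$. Using \eqref{add3} together with the duality definition \eqref{15} for $(T^{-1})^\dagger$, one has $\left<\eta_{x_0},\eta^{y_0}\ast\varphi\right>=\left<T\xi_{x_0},\,((T^{-1})^\dagger\xi_{y_0})\ast\varphi\right>$. Here is where the extra hypothesis \eqref{add2} enters: it lets me pull $(T^{-1})^\dagger$ out of the convolution, rewriting $((T^{-1})^\dagger\xi_{y_0})\ast\varphi=(T^{-1})^\dagger(\xi_{y_0}\ast\varphi)=(T^{-1})^\dagger(\varphi(\cdot-y_0))$, which is again in $\SC$. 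Then $\left<T\xi_{x_0},(T^{-1})^\dagger(\varphi(\cdot-y_0))\right>=\left<\xi_{x_0},T^\dagger(T^{-1})^\dagger(\varphi(\cdot-y_0))\right>$ by \eqref{15}, and since $T^\dagger(T^{-1})^\dagger=(T^{-1}T)^\dagger=\1$ on $\SC$, this collapses to $\left<\xi_{x_0},\varphi(\cdot-y_0)\right>=\varphi(x_0-y_0)$. This is exactly the computation done in the excerpt for $\xi_{x_0},\xi_{y_0}$, so it identifies $\overline{\eta_{x_0}}\ast\widetilde{\eta^{y_0}}$ with $\xi_{x_0-y_0}$ as a distribution, and evaluating at $0$ gives $\delta(x_0-y_0)$, as desired.

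Next, for the resolution of the identity, I would take $\varphi\in\SC$ and compute $\int_{\Bbb R}dx_0\,\eta_{x_0}(x)\left<\eta^{x_0},\varphi\right>$. Using \eqref{15}, $\left<\eta^{x_0},\varphi\right>=\left<(T^{-1})^\dagger\xi_{x_0},\varphi\right>=\left<\xi_{x_0},T^{-1}\varphi\right>=(T^{-1}\varphi)(x_0)$, since $T^{-1}\varphi\in\SC$. Hence $\int_{\Bbb R}dx_0\,\eta_{x_0}(x)\,(T^{-1}\varphi)(x_0)=\int_{\Bbb R}dx_0\,(T\xi_{x_0})(x)\,(T^{-1}\varphi)(x_0)$, and pulling $T$ (acting in the $x$ variable) outside the $x_0$-integral, this equals $T\left(\int_{\Bbb R}dx_0\,\xi_{x_0}\,(T^{-1}\varphi)(x_0)\right)=T(T^{-1}\varphi)=\varphi$, where I have used the scalar resolution \eqref{add1} for the $\xi_{x_0}$'s. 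The symmetric statement $\int_{\Bbb R}dx_0\,\eta^{x_0}(x)\left<\eta_{x_0},\varphi\right>=\varphi$ follows the same way, now with the roles of $T$ and $(T^{-1})^\dagger$ interchanged: $\left<\eta_{x_0},\varphi\right>=\left<T\xi_{x_0},\varphi\right>=\left<\xi_{x_0},T^\dagger\varphi\right>=(T^\dagger\varphi)(x_0)$, and $\int dx_0\,((T^{-1})^\dagger\xi_{x_0})(x)\,(T^\dagger\varphi)(x_0)=(T^{-1})^\dagger(T^\dagger\varphi)=\varphi$.

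The main obstacle is making the formal interchange of $T$ (and $(T^{-1})^\dagger$) with the integral over the spectral parameter rigorous in the distributional setting — i.e., justifying that applying $T$ "under the integral sign" to $\int dx_0\,\xi_{x_0}(x)\,(T^{-1}\varphi)(x_0)$ is legitimate. The honest way to phrase everything is to test against a further $\psi\in\SC$, convert each application of $T$ or $(T^{-1})^\dagger$ into its adjoint acting on $\psi$ via \eqref{15}, and only then carry out the (now ordinary) Fubini-type exchange; hypothesis \eqref{add2} is precisely what guarantees these manipulations are consistent. I expect the verification to reduce, after this bookkeeping, to the two scalar identities in \eqref{add1}, so once the duality pairing is handled carefully there should be no residual analytic difficulty.
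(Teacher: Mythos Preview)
Your argument matches the paper's own proof almost exactly: the biorthogonality computation via \eqref{add3} and \eqref{add2} is identical, and your final paragraph's suggestion to test the resolution of the identity against a second $\psi\in\SC$ is precisely what the paper does from the outset, writing $\int dx_0\,\langle\varphi,\eta_{x_0}\rangle\langle\eta^{x_0},\psi\rangle=\langle T^\dagger\varphi,T^{-1}\psi\rangle=\langle\varphi,\psi\rangle$ using only \eqref{15} and \eqref{add1}. One small overstatement to correct: assumption \eqref{add2} is invoked by the paper only for the biorthogonality step, not for the resolution of the identity, which goes through from $\SC$-stability alone.
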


\begin{proof}
Let us take $\varphi(x),\psi(x)\in\SC$. Then we have, using the stability of $\SC$ under the action of both $T^\dagger$ and $T^{-1}$, as well as the resolution of the identity in (\ref{add1}), valid for all functions in $\SC$,
$$
\int_{\Bbb R}dx_0\left<\varphi,\eta_{x_0}\right>\left<\eta^{x_0},\psi\right>=\int_{\Bbb R}dx_0\left<\varphi,T\xi_{x_0}\right>\left<(T^{-1})^\dagger\xi_{x_0},\psi\right>=$$
$$=\int_{\Bbb R}dx_0\left<T^\dagger\varphi,\xi_{x_0}\right>\left<\xi_{x_0},T^{-1}\psi\right>=\left<T^\dagger\varphi,T^{-1}\psi\right>=
\left<\varphi,\psi\right>.
$$
In a similar way we get
$$
\int_{\Bbb R}dx_0\left<\varphi,\eta^{x_0}\right>\left<\eta_{x_0},\psi\right>=\int_{\Bbb R}dx_0\left<\varphi,(T^{-1})^\dagger\xi_{x_0}\right>\left<T\xi_{x_0},\psi\right>=
\left<\varphi,\psi\right>.
$$
To prove now that $\left<\eta_{x_0},\eta^{y_0}\right>=\delta(x_0-y_0)$ we first recall that, see (\ref{add3}), $
\left(\overline \eta_{x_0}\ast\tilde \eta^{y_0},\varphi\right)=\left<\eta_{x_0},\eta^{y_0}\ast\varphi\right>$, for all $\varphi(x)\in\SC$. But, since $(\eta^{y_0}\ast\varphi)(x)=\varphi(x-y_0)$ is in $\SC$, using (\ref{add2}),
$$
\left<\eta_{x_0},\eta^{y_0}\ast\varphi\right>=\left<T\xi_{x_0},\eta^{y_0}\ast\varphi\right>=
\left<\xi_{x_0},T^\dagger\left(\eta^{y_0}\ast\varphi\right)\right>=
\left<\xi_{x_0},\left(\xi_{y_0}\ast\varphi\right)\right>=\left(\overline \xi_{x_0}\ast\tilde \xi_{y_0},\varphi\right).
$$
Hence $\left(\overline \eta_{x_0}\ast\tilde \eta^{y_0},\varphi\right)(x)=\left(\overline \xi_{x_0}\ast\tilde \xi_{y_0},\varphi\right)(x)$, and therefore we have
$$
\left<\eta_{x_0},\eta^{y_0}\right>=\left(\overline \eta_{x_0}\ast\tilde \eta_{y_0},\varphi\right)(0)=\left(\overline \xi_{x_0}\ast\tilde \xi_{y_0},\varphi\right)(0)=\left<\xi_{x_0},\xi_{y_0}\right>=\delta(x_0-y_0),
$$
which is what we had to prove.

\end{proof}

\vspace{2mm}

{\bf Remarks:--} (1) A similar proof can be repeated to prove that also $\F_\mu$ is $\hat p$-well-behaved.

(2) It might be interesting to observe that the equality $\left<\eta_{x_0},\eta^{y_0}\right>=\delta(x_0-y_0)$ can be deduced following a very different approach from the one adopted in Proposition \ref{prop11}, i.e. by making use of the so-called quasi-bases, see \cite{baginbagbook}. To construct the quasi-bases we consider an o.n. basis for $\Lc^2(\Bbb R)$, say $\F=\{e_n(x)\in\SC\}$. To be concrete, we can think of $\F_e$ as the set of eigenstates of the harmonic oscillator, since these are all in $\SC$, and form an o.n. basis of $\Lc^2(\Bbb R)$. Then we introduce $\F_\varphi=\{\varphi_n(x)=T e_n(x)\}$ and $\F_\Psi=\{\Psi_n(x)=(T^{-1})^\dag e_n(x)\}$. Assuming that $T$ is $\SC$-stable, then the functions $\varphi_n(x)$ and $\Psi_n(x)$ are all in $\SC$. Also, $\F_\varphi$ and $\F_\Psi$ are biorthogonal, $\left<\varphi_n,\Psi_m\right>=\delta_{n,m}$, and they are $\SC$-quasi bases, \cite{baginbagbook}. Indeed, it is easy to check that, for all $\gamma(x), \eta(x)\in\SC$, we have
\bea
\left<\gamma,\eta\right>=\sum_n\left<\gamma,\varphi_n\right>\left<\Psi_n,\eta\right>= \sum_n\left<\gamma,\Psi_n\right>\left<\varphi_n,\eta\right>\label{pers}.
\ena
Our working assumption here is that this same equality can be extended outside $\SC$, to all distributions $\xi_{x_0}$, $x_0\in\Bbb R$. Indeed, assuming for instance that
\bea
\delta(x_0-y_0)=\left<\xi_{x_0},\xi_{y_0}\right>=\sum_n\left<\xi_{x_0},\varphi_n\right>\left<\Psi_n,\xi_{y_0}\right>,\label{xiproduct}
\ena
it is easy to conclude, again, that
\bea
\left<\eta_{x_0},\eta^{y_0}\right>=\left<\eta^{y_0},\eta_{x_0}\right>=\delta(x_0-y_0).
\ena
In fact we have
\beano\left<\eta^{y_0},\eta_{x_0}\right>=
\sum_n\left<(T^{-1})^\dag\xi_{y_0},\varphi_n\right>\left<\Psi_n,T^{-1}\xi_{y_0}\right>=
\sum_n\left<\xi_{x_0},T^{-1}\varphi_n\right>\left<T^{\dag}\Psi_n,\xi_{y_0}\right>=\\
\sum_n\left<\xi_{x_0},e_n\right>\left<e_n,\xi_{y_0}\right>=
\sum_ne_n(x_0)\overline{e_n(y_0)}=\delta(x_0-y_0).
\enano

 This approach is particularly interesting since it is heavily connected with the general settings proposed in recent years for deformed canonical commutation and anti-commutation relations, see \cite{BP2016,BGV15,BP2017} for  recent applications, and since makes no use of equality (\ref{add2}), which is not always satisfied even in simple cases, as in the first example discussed in Section \ref{Ex1}.

\vspace{2mm}

Following \cite{bagriferimento}, we can also introduce now two operators, $S_\eta$ and $S^\eta$, on the following generalized domains:
$$
D(S_\eta)=\left\{F\in\SSS:\int_{\Bbb R}dx\left<\eta_x,F\right>\eta_x\in\SSS\right\},
$$
$$D(S^\eta)=\left\{G\in\SSS:\int_{\Bbb R}dx\left<\eta^x,G\right>\eta^x\in\SSS\right\},
$$
and
\be
S_\eta F=\int_{\Bbb R}dx\left<\eta_x,F\right>\eta_x, \qquad S^\eta G=\int_{\Bbb R}dx\left<\eta^x,G\right>\eta^x,
\label{110}\en
for all $F\in D(S_\eta)$ and $G\in D(S^\eta)$. In particular it is clear that $\eta^y\in D(S_\eta)$ and that $\eta_y\in D(S^\eta)$, for all $y\in\Bbb R$. In particular, $S_\eta\eta^y=\eta_y$, while $S^\eta\eta_y=\eta^y$. Furthermore, if $T$ is bounded, then for all $\varphi(x)\in\SC$ we get $S_\eta \varphi=TT^\dagger \varphi$. Also, if $T^{-1}$ is bounded, then $S^\eta f=(T^{-1})^\dagger T^{-1}\varphi$. Of course, when they are both bounded, we see that $S_\eta$ and $S^\eta$ are one the inverse of the other. More results on $S_\eta$ and $S^\eta$ are discussed in \cite{bagriferimento}, where some connections of these operators with the so-called $kq-$representation, see e.g. \cite{zak2}, are also considered. Here we just want to notice that similar operators are somehow used in the literature to define new scalar products in the Hilbert space, see \cite{baginbagbook} and references therein.

\subsection{A brief algebraic view to $\hat q$ and $\hat p$}

In what we have done so far, we have used techniques borrowed from
functional analysis and distribution theory to deal with $\hat q$,
$\hat p$, and their adjoints. Now, we briefly suggest a possible
alternative approach to deal with these operators, based on
certain algebras of unbounded operators. We refer to \cite{ait}
for a mathematically oriented monograph, and to \cite{bagrev} for
a more physically focused review.

If $\D$ is a dense subspace of a (separable) Hilbert space $\Hil$ we denote by
$\Lc^\dagger(\D)$ the set of all the operators  which leave,
together with their adjoints,  $\D$ invariant. Then
$\Lc^\dagger(\D)$ is a *-algebra with respect to the usual
operations. In particular, the adjoint in $\Lc^\dagger(\D)$ is just the restriction of the usual adjoint to $\D$. It is worth remarking that $\Lc^\dagger(\D)$ contains suitable unbounded operators, and this is, in fact, its main {\em raison d'etre}, \cite{ait,bagrev}. This can be easily understood since, in many concrete applications, $\D$ is taken to be the domain of all the powers of some suitable unbounded, densely defined, self-adjoint operator $N$ on $\Hil$: $\D=D^\infty(N)=\cap_{k\geq0}D(N^k)$, which, due to the assumptions on $N$, is automatically dense in $\Hil$. In particular, if $N=p^2+x^2$, where $p=-i\frac{d}{dx}$, it is known that $\D=\SC$, and that the topology $\tau_\Sc$ is equivalent to the one defined by the seminorms $
f\rightarrow \|N^nf\|, \qquad n\geq 0$,
\cite{pip}. Then we get the following {\em rigged Hilbert space}:
$$ \SC \subset L^2({\mb R}) \subset \SSS, $$
see also \cite{blt}. From now on we identify $\D$ with $\SC$. Therefore, the set of the distributions $\SSS$ is just the dual of $\D$.

\begin{defn}\label{deffa}
 An invertible operator $T$ in the Hilbert Space $\Hil$, such that $T, T^{-1}\in \Lc^\dagger(\D)$ is called  {\em admissible}  if
there exist an o.n. basis $\F_e=\{e_n(x)\in\SC\}$  for $\Lc^2(\Bbb R)$, such that  $\F_\varphi=\{\varphi_n(x)=T e_n(x)\}$ and $\F_\Psi=\{\Psi_n(x)=(T^{-1})^\dag e_n(x)\}$ are $\D'$ quasi bases, in the following sense: for every $F,G\in\D'$  \be
\left<F,G\right>=\sum_n\left<F,\varphi_n\right>\left<\Psi_n,G\right>=\sum_n\left<F,\Psi_n\right>\left<\varphi_n,G\right>\label{enz}.
\en

\end{defn}

\vspace{2mm}

{\bf Remark:--} Because of the properties of $\Lc^\dagger(\D)$, \cite{ait}, an element $A\in \Lc^\dagger(\D)$ is automatically {\em fully-admissible}. By this we mean that, for all sequence $\varphi_n(x)\in\SC$ $\tau_\Sc$-converging to $\varphi(x)$ the sequences $(T^\dagger \varphi_n)(x)$ and $(T^{-1} \varphi_n)(x)$ both converge in the same topology. It is evident how the concept of fully-admissibility can be seen as an {\em algebraic counterpart} of the fully stability we have introduced before. In fact, what we are doing in this short section, is just adopting a different language to deduce the same results.

\vspace{2mm}

Let now  $A\in \Lc^\dagger(\D)$  be  admissible, and therefore fully-admissible. Then $A$ can be
extended, by duality, to a continuous operator $A_{ex}:\SSS
\to \SSS$. In fact, to keep the notation simple and since no confusion can arise, in the following we identify $A$ and $A_{ex}$. We have: $ \left< {A}F, \varphi \right>=\left< F, A^\dag \varphi\right>$, $\forall F \in \SSS,\, \varphi\in \SC$.
This operator is still linear and it is also continuous: in fact, due to Definition \ref{deffa}, if $\varphi_n(x)\rightarrow \varphi(x)$ in the topology $\tau_\Sc$, then $A^\dagger \varphi_n\rightarrow A^\dagger \varphi$ in the same topology. Hence
$
\left<AF,\varphi_n\right>=\left<F,A^\dagger \varphi_n\right> \rightarrow \left<F,A^\dagger \varphi\right> = \left<A F, \varphi\right>,
$
for all $\varphi(x)\in \SC$  and $F\in \SSS$. In particular, then, $\eta_{x_0}=T\xi_{x_0}$ and $\eta^{x_0}=(T^{-1})^\dagger \xi_{x_0}$ are both in $\SSS$, and the following hold:
$$
\left<\eta_{x_0},\varphi\right>=(T^\dagger \varphi)(x_0), \qquad \left<\eta^{x_0},\varphi\right>=(T^{-1} \varphi)(x_0),
$$
for all $\varphi(x)\in\SC$.
Moreover, using again Definition \ref{deffa} and the fact that $\hat q_0\in \Lc^\dagger(\D)$, it is clear that $\hat q=T\hat q_0 T^{-1}$ is also in $\Lc^\dagger(\D)$.
Then we have:

%
%
%
%
%
%

\begin{prop} If  $T\in \Lc^\dagger(\D)$ is  admissible,  the set $\F_\eta=\{\eta_{x_0},\,x_0\in\Bbb R \}$ is well behaved.

\end{prop}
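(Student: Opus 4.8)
The plan is to check, for $\F_\eta=\{\eta_{x_0}=T\xi_{x_0}\}$ together with the companion family $\F^\eta=\{\eta^{x_0}=(T^{-1})^\dagger\xi_{x_0}\}$, the two requirements of Definition~\ref{def2}. The first one is essentially free in the present language. Admissibility gives $T,T^{-1}\in\Lc^\dagger(\D)$, hence $\hat q=T\hat q_0T^{-1}\in\Lc^\dagger(\D)$ (using $\hat q_0\in\Lc^\dagger(\D)$), and, as recalled just before the statement, every element of $\Lc^\dagger(\D)$ extends by duality to a continuous operator on $\SSS$. Since $\eta_{x_0},\eta^{x_0}\in\SSS$ — again established above from Definition~\ref{deffa} — one computes inside $\SSS$
$$
\hat q\,\eta_{x_0}=T\hat q_0T^{-1}(T\xi_{x_0})=T\hat q_0\,\xi_{x_0}=x_0\,T\xi_{x_0}=x_0\,\eta_{x_0},
$$
which is Corollary~\ref{cor3} re-read algebraically, the manipulation being legitimate because each factor is a well-defined continuous operator on $\SSS$. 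So each $\eta_{x_0}$ is a generalized eigenstate of $\hat q$ with eigenvalue $x_0$.

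The substantive point is the biorthogonality $\langle\eta_{x_0},\eta^{y_0}\rangle=\delta(x_0-y_0)$, and the reason for working in the $\Lc^\dagger(\D)$ framework is to obtain it from the quasi-basis identity~(\ref{enz}) rather than from the convolution hypothesis~(\ref{add2}) used in Proposition~\ref{prop11}. Since $\eta_{x_0}$ and $\eta^{y_0}$ are genuine elements of $\SSS=\D'$, (\ref{enz}) applies to the pair $(F,G)=(\eta_{x_0},\eta^{y_0})$; expanding with its second form,
$$
\langle\eta_{x_0},\eta^{y_0}\rangle=\sum_n\langle\eta_{x_0},\Psi_n\rangle\,\langle\varphi_n,\eta^{y_0}\rangle ,
$$
I would move $T$ and $(T^{-1})^\dagger$ across the pairings via~(\ref{15}) and use the identities $T^\dagger\Psi_n=T^\dagger(T^{-1})^\dagger e_n=e_n$ and $T^{-1}\varphi_n=T^{-1}Te_n=e_n$, so that the generic summand collapses to $\langle\xi_{x_0},e_n\rangle\,\overline{\langle\xi_{y_0},e_n\rangle}=e_n(x_0)\overline{e_n(y_0)}$; the remaining sum is the completeness relation $\sum_n e_n(x_0)\overline{e_n(y_0)}=\delta(x_0-y_0)$ of the o.n.\ basis $\F_e$. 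This is precisely the computation sketched in the Remark following Proposition~\ref{prop11}.

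For the two resolutions of the identity on $\SC$ the quasi-basis is not needed, and I would simply transcribe the proof of Proposition~\ref{prop11}: for $\varphi,\psi\in\SC$, using stability of $\SC$ under $T^\dagger$ and $T^{-1}$ (a consequence of $T\in\Lc^\dagger(\D)$) together with~(\ref{15}),
$$
\int_{\Bbb R}dx_0\,\langle\varphi,\eta_{x_0}\rangle\langle\eta^{x_0},\psi\rangle=\int_{\Bbb R}dx_0\,\langle T^\dagger\varphi,\xi_{x_0}\rangle\langle\xi_{x_0},T^{-1}\psi\rangle=\langle T^\dagger\varphi,T^{-1}\psi\rangle=\langle\varphi,\psi\rangle,
$$
the middle equality being the resolution~(\ref{add1}) applied to $T^\dagger\varphi,T^{-1}\psi\in\SC$, and symmetrically $\int_{\Bbb R}dx_0\,\langle\varphi,\eta^{x_0}\rangle\langle\eta_{x_0},\psi\rangle=\langle\varphi,\psi\rangle$. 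Together with the eigenvalue equation and biorthogonality, this verifies all clauses of Definition~\ref{def2}.

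The only delicate matters I anticipate are bookkeeping ones: confirming that the duality extensions of $T$, $T^\dagger$, $(T^{-1})^\dagger$ act on the $\xi_{x_0}$ as the formal computations suggest — so that identities such as $\langle(T^{-1})^\dagger\xi_{y_0},\varphi\rangle=\langle\xi_{y_0},T^{-1}\varphi\rangle$ and the conjugate-symmetry of the extended pairing may be used freely — and, for full rigour in the resolution of the identity, justifying the interchange of the $x_0$-integral with the pairing exactly as in Proposition~\ref{prop11}. None of this is a genuine obstacle: by the Remark after Definition~\ref{deffa}, admissibility entails fully-admissibility, and it is precisely this continuity that makes the extensions well defined and the passages to the limit automatic, so the argument is essentially a translation of Proposition~\ref{prop11} and its Remark into the language of $\Lc^\dagger(\D)$.
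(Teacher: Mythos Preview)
Your proposal is correct and follows essentially the same route as the paper: the eigenvalue equation via the similarity $\hat q=T\hat q_0T^{-1}$ acting on $T\xi_{x_0}$, the biorthogonality via the $\D'$-quasi-basis expansion~(\ref{enz}) together with $T^\dagger\Psi_n=T^{-1}\varphi_n=e_n$ and the completeness of $\F_e$, and the resolution of the identity by transferring $T^\dagger$ and $T^{-1}$ onto the test functions and invoking~(\ref{add1}). The only differences are cosmetic (you expand $\langle\eta_{x_0},\eta^{y_0}\rangle$ with the second form of~(\ref{enz}) whereas the paper expands $\langle\eta^{y_0},\eta_{x_0}\rangle$ with the first), and your closing remarks on the bookkeeping role of fully-admissibility are accurate.
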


\begin{proof}
\begin{enumerate}
\item  For every $ x_0\in\R, \varphi(x) \in\SC$
$$\left< \hat {q} \eta_{x_0}, \varphi \right>:=\left<(T\hat q_0T^{-1})(T\xi_{x_0}),\varphi \right>=\left<T\hat q_0\xi_{x_0},\varphi \right>=\left<x_0T\xi_{x_0}, \varphi \right>=\left<x_0\eta_{x_0}, \varphi \right>.$$
Hence $\hat q \eta_{x_0}=x_0\eta_{x_0}$.
\item
\beano\left<\eta^{y_0},\eta_{x_0}\right>=
\sum_n\left<(T^{-1})^\dag\xi_{y_0},\varphi_n\right>\left<\Psi_n,T^{-1}\xi_{y_0}\right>=
\sum_n\left<\xi_{x_0},T^{-1}\varphi_n\right>\left<T^{\dag}\Psi_n,\xi_{y_0}\right>=\\
=\sum_n\left<\xi_{x_0},e_n\right>\left<e_n,\xi_{y_0}\right>=
\sum_ne_n(x_0)\overline{e_n(y_0)}=\delta(x_0-y_0).
\enano
\item For every $ x_0,y_0 \in\R$, and for all $\varphi(x),\psi(x)\in\SC$,
using the resolution of the identity in (\ref{add1}), valid for
all functions in $\SC$,
$$
\int_{\Bbb
R}dx_0\left<\varphi,\eta_{x_0}\right>\left<\eta^{x_0},\psi\right>=
\int_{\Bbb
R}dx_0\left<\varphi,T\xi_{x_0}\right>\left<(T^{-1})^\dagger\xi_{x_0},\psi\right>=\int_{\Bbb
R}dx_0\left<T^\dagger\varphi,\xi_{x_0}\right>\left<\xi_{x_0},T^{-1}\psi\right>=$$
$$=\left<T^\dagger\varphi,T^{-1}\psi\right>=
\left<\varphi,\psi\right>.
$$
\end{enumerate}
\end{proof}

Of course, a similar procedure can be repeated for $\hat p=T\hat p_0 T^{-1}$.

The conclusion of this analysis is that we could use the algebraic settings provided by $\Lc^\dagger(\D)$ rather than the one adopted previously by simply replacing the notion of stability with that of admissibility.  A deeper analysis of this alternative approach is postponed to a future paper.

\section{Examples}\label{sect3}

This section is devoted to the discussion of two examples of our general results. In particular, we will first analyze a situation in which both $T$ and $T^{-1}$ are bounded, and then a different situation in which $T$ is bounded, but $T^{-1}$ is not.

\subsection{First example}
\label{Ex1}
For every
$u,v \in \SC$  such that $\ip{u}{v}=1,$ we define the
operator $P_{u,v}f:=\ip{u}{f}v$. {Notice that $u$ and $v$ cannot have different parity, since in this case they would be automatically orthogonal. Then, either they have the same parity, or their parity is not defined}. Assume that $\alpha,\beta$ are complex numbers such that $\alpha+\beta+\alpha\beta=0$. Then, if $\alpha\neq-1$, $\beta=\frac{-\alpha}{1+\alpha}$, and the new operator
$$T=\1+\alpha P_{u,v}$$  is invertible, with $T^{-1}=\1+\beta P_{u,v}$.
Unless $u=v$ and $\alpha\in\R$, $T$ is not Hermitian, nor unitary, and we have $T^{\dag}=\1+\overline{\alpha}P_{v,u}\neq
T^{-1}$. Then $(T^{-1})^\dag=\1+\overline{\beta} P_{v,u}=(T^\dagger)^{-1}$.

Recalling that  $u,v \in \Sc({\mb R})$, $T$ turns out to be fully $\SC$-stable. In fact, first of all it is evident that $T, T^{-1},T^\dag,(T^{-1})^\dag$ all map $\SC$ in $\SC$.
Moreover, if $\{\varphi_n\in\SC\}$ is a sequence $\tau_\Sc$-convergent to $\varphi\in\SC$ then, for each $F\in\Sc'(\Bbb R)$,
$$
\left<F,T^\dag \varphi_n\right>=\left<F,\varphi_n+\overline{\alpha}\left<v,\varphi_n\right>u\right>=
\left<F,\varphi_n\right>+\overline{\alpha}\left<v,\varphi_n\right>\left<F,u\right>\longrightarrow \left<F,\varphi\right>+\overline{\alpha}\left<v,\varphi\right>\left<F,u\right>=$$
$$=
\left<F,\varphi+\overline{\alpha}\left<v,\varphi\right>u\right>=\left<F,T^\dag \varphi\right>.
$$
Similarly,
$
\left<F,T^{-1} \varphi_n\right>\rightarrow\left<F,T^{-1} \varphi\right>,
$
and therefore both $T^\dag$ and $T^{-1}$ map $\SC$ into itself with continuity.
As a consequence of the full $\SC$-stability of  $T$,  Corollary \ref{cor3} implies that, for each $x_0,x\in \R$,
$\eta_{x_0}(x),\eta^{x_0}(x)\in\SSS$, and, in particular, that  $\eta_{x_0}(x)\in D(\hat q)$ and $\eta^{x_0}(x)\in D(\hat q^\dag)$.
The following expressions for $\eta_{x_0}(x),\eta^{x_0}(x)$ follows from \eqref{16}:

\bea
\eta_{x_0}(x)&=&(T\xi_{x_0})(x)=\xi_{x_0}(x)+\alpha \ip{u}{\xi_{x_0}}v(x)=\xi_{x_0}(x)+\alpha\, v(x)\overline{u(x_0)},\label{21}\\
\eta^{x_0}(x)&=&\left((T^{-1})^\dagger\xi_{x_0}\right)(x)=\xi_{x_0}(x)+\overline{\beta \,v(x_0)}\,u(x)\label{22}.
\ena

We now prove that the sets $\F_\eta=\{\eta_{x_0}\in\Sc'(\Bbb R), \,x_0\in\Bbb R\}$ and $\F^\eta=\{\eta^{x_0}\in\Sc'(\Bbb R),\,x_0\in \Bbb R\}$ form two families of well-behaved states in the sense of Definition \ref{def2}.

In fact, we obtain the following:
\begin{enumerate}
\item From Corollary \ref{cor3}, \eqref{17}, it follows that $\hat q\, \eta_{x_0}(x)=x_0\eta_{x_0}(x)$. It is instructive to show how this result also follows from a direct computation. Using (\ref{13}) and (\ref{21}) we get
$$\hat q\,\eta_{x_0}(x)=(\1+\alpha  P_{u,v})\hat{q_0}(\1+\beta
P_{u,v})\left(\xi_{x_0}(x)+\alpha\, v(x)\overline{u(x_0)}\right)=$$
$$=(\1+\alpha  P_{u,v})\hat{q_0}(\xi_{x_0}(x)+(\alpha+\beta+\alpha\beta) \overline{u(x_0)}v(x))=$$
$$=
(\1+\alpha  P_{u,v})(x_0\,\xi_{x_0}(x))=\\
x_0\,T\xi_{x_0}(x)=x_0\eta_{x_0}(x).$$
\item $\forall \varphi,\psi \in \SC$
$$
\int_{\Bbb R}dx_0\left<\varphi,\eta^{x_0}\right>\left<\eta_{x_0},\psi\right>=\int_{\Bbb R}dx_0\left<T^\dagger\varphi,\xi_{x_0}\right>\left<\xi_{x_0},T^{-1}\psi\right>=$$
$$=\int_{\Bbb R}dx_0\left(\overline{\varphi}(x_0)+\alpha\overline{\left<v,\varphi\right>}\overline{u}(x_0)\right)
\left(\psi(x_0)+\beta\left<u,\psi\right>v(x_0)\right)=$$
$$=
\left<\varphi,\psi\right>+\left(\alpha+\beta+\alpha\beta\right)\left<u,\psi\right>\left<\varphi,v\right>=\left<\varphi,\psi\right>.
$$
Similarly,
$$
\int_{\Bbb R}dx_0\left<\varphi,\eta_{x_0}\right>\left<\eta^{x_0},\psi\right>=
\left<\varphi,\psi\right>.
$$

\item $\forall x_0,y_0 \in\R$, using \eqref{21}-\eqref{22} and the constraint $\alpha+\beta+\alpha\beta=0$, we have:
$$\left<\eta_{x_0},\eta^{y_0}\right>=\left<\xi_{x_0},\xi_{y_0}\right>+\alpha\overline{u(y_0)}v(x_0)+\beta\overline{u(y_0)}v(x_0)+\alpha\beta\overline{u(y_0)}v(x_0)\left<u,v\right>=\delta(x_0-y_0)$$
We also notice that the condition $\left<\eta_{x_0},\eta^{y_0}\right>=\delta(x_0-y_0)$ is ensured by \eqref{xiproduct}, which can be checked to hold.
In fact,  let $\F=\{e_n(x)\in\Lc^2(\Bbb R)\}$ be an o.n. basis for $\Lc^2(\Bbb R)$, and suppose also that the $e_n(x)$ belongs to $\SC$. Then
we construct the sets $\F_\varphi=\{\varphi_n(x)=T e_n(x)\}$ and $\F_\Psi=\{\Psi_n(x)=(T^{-1})^\dag e_n(x)\}$, where
$$
\varphi_n(x)=e_n(x)+\alpha\,\left<u,e_n\right>v(x),\qquad
\Psi_n(x)=e_n(x)+\overline{\beta}\,\left<v,e_n\right>u(x).
$$
The functions $\varphi_n(x)$ and $\Psi_n(x)$ are all in $\SC$, and it is easy to show that they
form a biorthonormal family, $\left<\varphi_n,\Psi_m\right>=\delta_{n,m}$. Actually, since both $T$ and $T^{-1}$ are bounded, they form two biorthonormal Riesz bases.
Using the expansion  $u(x_0)=\sum_n\left<e_n,u\right>e_n(x_0)$, true in particular for all $u(x)\in\SC,x_0\in \R$, we obtain
$$
\sum_n\left<\xi_{x_0},\Psi_n\right>\left<\varphi_n,\xi_{y_0}\right>=
\sum_n\left<\xi_{x_0},e_n+\overline{\beta}\left<v,e_n\right> u\right>\left<e_n+\alpha\left<u,e_n\right> v,\xi_{y_0}\right>=$$
$$=\sum_n\left[e_n(x_0)+\overline{\beta}\left<v,e_n\right>u(x_0)\right]\left[\overline{e_n(y_0)}+\overline{\alpha\left<u,e_n\right>}v(x_0)\right]=$$
$$=\left(\sum_n e_n(x_0)\overline{e_n(y_0)}\right)+(\overline{\alpha+\beta+\alpha\beta})u(x_0)\overline{v(y_0)}=\left(\sum_n e_n(x_0)\overline{e_n(y_0)}\right)=$$
$$=\delta(x_0-y_0)=\left<\xi_{x_0},\xi_{y_0}\right>.
$$
\end{enumerate}
Conditions (1-3) above ensure that $\F_\eta$ is well behaved. We could further check that, for every $\varphi\in\SC$,  the following relations hold:

$$(\hat q\varphi)(x)=x\varphi(x)+(\alpha \ip{u}{x\varphi}+\beta
\ip{u}{\varphi}x+\alpha\beta\ip{u}{\varphi}\ip{u}{xv})v(x),$$

$$(\hat p\varphi)(x)=-i\frac{d\varphi(x)}{dx}-i\left(\beta \ip{u}{\varphi}\frac{dv(x)}{dx}+\alpha
\left<u,\frac{d\varphi}{dx}\right>v(x)+\alpha\beta\ip{u}{\varphi}\left<u,\frac{dv(x)}{dx}\right>\right)v(x),$$
which give the explicit action of $\hat q$ and $\hat p$ on functions of $\SC$. { In fact, these can be seen as particular cases of the more general situation: let $\Theta_0$ be a self-adjoint operator, mapping $\SC$ into $\SC$ (for instance $\hat q_0$ or $\hat p_0$), and let $T$ as before. Then the operator $\Theta=T\Theta_0T^{-1}$ works on $\SC$ as follows:
$$
(\Theta \varphi)(x)=(\Theta_0\varphi)(x)+(\delta \varphi)(x),
$$
where
$$
(\delta \varphi)(x)=\beta\ip{u}{\varphi}(\Theta_0 v)(x)+\alpha\left[\ip{\Theta_0u}{\varphi}+\beta\ip{\Theta_0u}{v}\ip{u}{\varphi}\right]v(x).
$$
It is interesting to see that, when $\Theta_0$ coincides with $\hat q_0$ or with $\hat p_0$, if $v(x)$ has definite parity, $(\delta \varphi)(x)$ is necessarily not zero. This is easy to see: suppose this is not so, i.e. that $(\delta \varphi)(x)=0$ for all $x\in\Bbb R$. Hence
$$
\beta\ip{u}{\varphi}(\Theta_0 v)(x)=-\alpha\left[\ip{\Theta_0u}{\varphi}+\beta\ip{\Theta_0u}{v}\ip{u}{\varphi}\right]v(x),
$$
which is impossible since the two sides of this equation would have different parities, both if $\Theta_0=\hat q_0$ and if $\Theta_0=\hat p_0$. Hence our map $T$ is non trivial: it really changes the action of $\hat q_0$ and $\hat p_0$, while maintaining the commutation rules between the deformed operators: $[\hat q_0,\hat p_0]=[\hat q,\hat p]=i\1$ (in the sense of unbounded operators).

From a more physical side, we see that $\hat q$ and $\hat p$ differ from their self-adjoint counterparts for an additive term which, in the first case, is a linear combination of $v(x)$ and $xv(x)$, while in the second case, is a linear combination of $v(x)$ and $v'(x)$, with coefficients depending on the function on which these operators are applied. In analogy with the models discussed in recent literature on {\em position-dependent mass}, see for instance \cite{pdm1,pdm2} and references therein, we can call our deformed operators $\hat q$ and $\hat p$ {{\em coordinate-dependent position and momentum operators}}. These operators, when suitably used in the construction of quadratic Hamiltonians of the harmonic oscillator type, give rise to completely solvable models, see for instance \cite{BGV15} for the analysis of this kind of models, even in presence of this explicit dependence on $x$.

}


%
%
%
%
%
%
%
%
%

\subsection{\label{esempio2} Second example}

Let  $T^{-1}$ be the following unbounded operator:
$$T^{-1}:=\1- i(\hat p_0)^2.$$
First of all, it is clear that $T^{-1}$ and its adjoint map $\SC$ into $\SC$. To see that $T$ is  $\SC$-stable, we also have to check that $T$ and $T^\dagger$ do the same. First we need to compute $T$, which can be found by introducing the Green function for $T^{-1}$: $(T^{-1}G)(x)=\delta(x)$. Then, standard computations give
%
%
%
%
%
%
%
%
%
%
%
%
%
%
%

$${G}(x)=\frac{i}{\sqrt{2}(1+i)}e^{-|x|\frac{\sqrt{2}}{2}(1+i)},$$
so that the actions of $T$ and $T^\dag$ on $\varphi(x)\in\SC$   are given by:

\bea
T(\varphi(x))=\frac{i}{\sqrt{2}(1+i)}\int_{\Bbb R}\varphi(x-s)e^{-|s|\frac{\sqrt{2}}{2}(1+i)}ds,\\
T^\dag(\varphi(x))=\frac{-i}{\sqrt{2}(1-i)}\int_{\Bbb R}\varphi(x-s)e^{-|s|\frac{\sqrt{2}}{2}(1-i)}ds.\label{defThelm}\ena
We want to check that $T$ is $\SC$-fully stable. We have already observed that $T^{-1}$ and $(T^{-1})^\dagger$ both map $\SC$ into itself. Less trivial is to check that $T$ also does the same. To see this, we will now check that $x^l \frac{d^k}{dx^k}T(\varphi(x))$ is well defined and goes to zero when $|x|$ diverges, for all $k$ and $l\geq0$.

First we can see that, for all $k\geq0$,
\be
\frac{d^k}{dx^k}T(\varphi(x))=\frac{i}{\sqrt{2}(1+i)}\frac{d^k}{dx^k}\int_{\Bbb R}\varphi(x-s)e^{-|s|\frac{\sqrt{2}}{2}(1+i)}ds =
\frac{i}{\sqrt{2}(1+i)}\int_{\Bbb R}\varphi^{(k)}(x-s)e^{-|s|\frac{\sqrt{2}}{2}(1+i)}ds.
\label{exa1}\en
This can be proved easily since the function $g(s,x):=\varphi(x-s)e^{-|s|\frac{\sqrt{2}}{2}(1+i)}$ satisfies the conditions which ensure the possibility of exchanging integrals and derivatives. In fact,  $\left|\frac{\partial^k g(s,x)}{\partial x^k}\right|\leq M_k e^{-|s|/\sqrt{2}}$, for all $x$, where  $M_k=\sup_{x\in\Bbb R}|\varphi^{(k)}(x)|$, which is finite since  $\varphi(x)\in\SC$. Then, in particular, $T(\varphi(x))$ is a $C^\infty$ function. Of course, from
(\ref{exa1}) we also deduce that
$$
x^l\frac{d^k}{dx^k}T(\varphi(x))=\frac{i}{\sqrt{2}(1+i)}\int_{\Bbb R}x^l \varphi^{(k)}(x-s)e^{-|s|\frac{\sqrt{2}}{2}(1+i)}ds,
$$
for all $k,l\geq0$. Finally, since
\be
\lim_{|x|,\infty} x^l\frac{d^k}{dx^k}T(f(x))=\frac{i}{\sqrt{2}(1+i)}\int_{\Bbb R}\lim_{|x|,\infty} x^l \varphi^{(k)}(x-s)e^{-|s|\frac{\sqrt{2}}{2}(1+i)}ds,
\label{exa2}\en
and since $\lim_{|x|,\infty} x^l \varphi^{(k)}(x-s)=0$ a.e. in $s$, we conclude that $T(\varphi(x))$ belongs to $\SC$. The equality in (\ref{exa2}) follows again from the possibility of exchanging the limit and the integral, which is true because
$$
\left|x^l \varphi^{(k)}(x-s)e^{-|s|\frac{\sqrt{2}}{2}(1+i)}\right|\leq M_{l,k}e^{-|s|/\sqrt{2}},
$$
where $M_{l,k}=\sup_{x,s\in\Bbb R}|x^l\varphi^{(k)}(x-s)|$, which is finite for all $l,k\geq0$, since $\varphi(x)\in\SC$.

Of course, the same holds true for $T^\dagger$, see (\ref{defThelm}). Hence, $T(\varphi(x))\in\SC$.

%
%
%
%
%
%
%
%
%
%
%
%
%
%
%
%

To prove that $T$ is $\SC$-fully stable, it remains to prove that for any sequence  $\varphi_n(x)\in\SC$ $\tau_\Sc$-convergent to $\varphi(x)\in\SC$, then
$(T^{-1}\varphi_n)(x)$ and $(T^\dag\varphi_n)(x)$   are  $\tau_\Sc$-convergent to $(T^{-1}\varphi)(x)$ and to $(T^{\dag}\varphi)(x)$, respectively.
It is clear that this condition is indeed true for $T^{-1}$. Regarding the convergence of $(T^\dag\varphi_n)(x)$, using \eqref{defThelm} we have:
$$\lim_{ n \rightarrow +\infty}x^l \frac{d^k}{dx^k}[(T^\dagger\varphi_n)(x)-(T^\dagger(\varphi)(x)]=$$
$$=\frac{-i}{\sqrt{2}(1-i)}
\lim_{ n \rightarrow +\infty}\int_{\Bbb R}e^{-|s|\frac{\sqrt{2}}{2}(1+i)}x^l\frac{d^k}{dx^k}[\varphi_n(x-s)-\varphi(x-s)]ds=0,$$
due to the Lebesgue dominated convergence theorem.
Hence $T$ is $\SC$-fully stable.

We are now ready to see how the results in Section \ref{sect2} look like in the present case. First of all, by Corollary \ref{cor3},
$\eta_{x_0}\in\SSS, \eta^{x_0}\in\SSS,\quad \forall x_0\in \mathbb{R}$, and that for each $x\in \R$, $\eta_{x_0}(x)\in D(\hat q)$ and $\eta^{x_0}(x)\in D(\hat q^\dag)$. Their explicit expressions is:

\bea
\eta_{x_0}(x)&=&(T\xi_{x_0})(x)=\frac{i}{\sqrt{2}(1+i)}\int_{\Bbb R}e^{-|s|\frac{\sqrt{2}}{2}(1+i)}\delta(x-x_0-s)ds=\\
\nonumber &&=\frac{i}{\sqrt{2}(1+i)}e^{-|x-x_0|\frac{\sqrt{2}}{2}(1+i)},\\
\eta^{x_0}(x)&=&\left((T^{-1})^\dagger\xi_{x_0}\right)(x)=\xi_{x_0}(x)-i\xi_{x_0}''(x)\label{23}.
\ena

We can then prove that $\F_\eta=\{\eta_{x_0}\in\Sc'(\Bbb R), \,x_0\in\Bbb R\}$ is well behaved.

In fact, we first observe that, because of the $\SC-$fully stability condition, Corollary \ref{cor3} implies that $\hat q \eta_{x_0}(x)=x_0\eta_{x_0}(x)$.

\vspace{2mm}

{\bf Remark:--} It is instructive to  verify that $\eta_{x_0}$ is an eigenfunction for $\hat q$ by a direct computation:
$$\hat q\eta_{x_0}(x)=T\hat{q_0}T^{-1}\eta_{x_0}(x)
=T\hat{q_0}T^{-1}\frac{i}{\sqrt{2}(1+i)}e^{-|x-x_0|\frac{\sqrt{2}}{2}(1+i)}=$$
$$
=T\hat{q_0}\left[\frac{ix_0}{\sqrt{2}(1+i)}e^{-|x-x_0|\frac{\sqrt{2}}{2}(1+i)})+
\delta(x-x_0)e^{-|x-x_0|\frac{\sqrt{2}}{2}(1+i)}-\frac{(1+i)x_0}{2\sqrt{2}}e^{-|x-x_0|\frac{\sqrt{2}}{2}(1+i)}\right]=$$
$$
=x_0\frac{i}{\sqrt{2}(1+i)}e^{-|x-x_0|\frac{\sqrt{2}}{2}(1+i)}=x_0\eta_{x_0}(x).
$$

Let us now take $x_0,y_0 \in\R$. Then:

$$\left<\eta_{x_0},\eta^{y_0}\right>=\left<\frac{i}{\sqrt{2}(1+i)}e^{-|x-x_0|\frac{\sqrt{2}}{2}(1+i)},\xi_{y_0}(x)-i{\xi''}_{y_0}(x)\right>=$$
$$\left<\frac{i}{\sqrt{2}(1+i)}e^{-|x-x_0|\frac{\sqrt{2}}{2}(1+i)},\xi_{y_0}(x)\right>+
\left<\frac{i}{\sqrt{2}(1+i)}e^{-|x-x_0|\frac{\sqrt{2}}{2}(1+i)},-i{\xi''}_{y_0}(x)\right>=$$
$$=\left<\frac{i}{\sqrt{2}(1+i)}e^{-|x-x_0|\frac{\sqrt{2}}{2}(1+i)},\xi_{y_0}(x)\right>+$$ $$
\left<\delta(x-x_0)e^{-|x-x_0|\frac{\sqrt{2}}{2}(1+i)}-\frac{(1+i)}{2\sqrt{2}}e^{-|x-x_0|\frac{\sqrt{2}}{2}(1+i)},{\xi}_{y_0}(x)\right>=
\left<\xi_{x_0},\xi_{y_0}\right>=\delta(x_0-y_0),$$
using the distributional derivative. Moreover, $\forall \varphi(x),\psi(x) \in \SC$, we can check that

$$
\int_{\Bbb R}dx_0\left<\varphi,\eta^{x_0}\right>\left<\eta_{x_0},\psi\right>=\int_{\Bbb R}dx_0\left<\varphi,\eta_{x_0}\right>\left<\eta^{x_0},\psi\right>=
\left<\varphi,\psi\right>.
$$

Hence $\F_\eta$ is well behaved, as claimed above.

Moreover it is easy to check that for every $\varphi(x)\in\SC$:

$$\hat q \varphi(x)=T\hat q_0T^{-1}\varphi(x)=\frac{i}{\sqrt{2}(1+i)}\int_{\Bbb R}(x\varphi''(x-s)+ix\varphi(x-s))e^{-|x-s|\frac{\sqrt{2}}{2}(1+i)}ds.$$

$$\hat p \varphi(x)=T\hat p_0T^{-1}\varphi(x)=\frac{i}{\sqrt{2}(1+i)}\int_{\Bbb R}(-i\varphi'''(x-s)+\varphi'(x-s))e^{-|x-s|\frac{\sqrt{2}}{2}(1+i)}ds,$$
which give the explicit expressions of $\hat q$ and $\hat p$ on functions of $\SC$.

{A possible framework where the example of this section could be useful is that of quantum field theory.
In fact, the one-particle Feyman propagator, \cite{mm},
$
D(p_0)=i{\left[{\hat p_0^2}+i(\epsilon+im^2)\1\right]}^{-1}=i{\left[{\hat p_0^2}+iz\1\right]}^{-1},
$
where $\epsilon$ is a constant and $m$ is the particle mass, is equal to operator $T$ introduced in this section when $z=1$.  This suggests the possibility of  studying and analyzing rigorously some mathematical techniques used to circumvent the constraints imposed by the standard formulation of the quantum field theory based on the use of Hermitian operators.

}

\section{Conclusion}\label{sectconcl}

We have seen how two non- self-adjoint {\em position} and {\em momentum} operators, $\hat q$ and $\hat p$, can be analyzed when they are related to the self-adjoint ones, $\hat q_0$ and $\hat p_0$, by some suitable similarity map $T$. In particular, we have shown that biorthogonal eigenvectors can be found for $\hat q$ and $\hat p$, and for $\hat q^\dagger$ and $\hat p^\dagger$ as well, which are distributions in $\SSS$, and which are well-behaved in the sense of \cite{bagriferimento}. We have also discussed in details two examples of somehow different nature, where in particular one can see the explicit form of these eigenvectors. An alternative algebraic settings has also been proposed.

We plan to continue this analysis in a close future, in particular in connection with bi-coherent states, extending what was originally done in \cite{bagriferimento}. We also plan to work more on a physical side of this paper, looking for concrete applications in which the mathematical framework discussed here can be of some utility as, for instance, in the analysis of time-dependent models.

\section*{Acknowledgements}
The authors acknowledge partial support from Palermo University. F.B. and F.G. also acknowledge partial support from G.N.F.M. of the I.N.d.A.M. S. T. acknowledges partial support from G.N.A.M.P.A. of the I.N.d.A.M. F.B. thanks the {\em Distinguished Visitor Program}
 of the Faculty of Science of the University of Cape Town, 2017.

\section*{Computational solution}

This paper does not contain any computational solution.

\section*{Ethics statement}

This work did not involve any active collection of human data.

\section*{Data accessibility statement}

This work does not have any experimental data.

\section*{Competing interests statement}

We have no competing interests.

\section*{Authors' contributions}

FB cured the mathematical part of the paper, with the help of FG, SS and ST. FG, SS and ST cured the concrete applications, with the help of FB. All authors gave final approval for publication.

\section*{Funding}

This work was partly supported by G.N.F.M. and G.N.A.M.P.A.-INdAM and by the University of Palermo.

\end{document}